\documentclass[11pt]{article}

\usepackage[margin=1in]{geometry}
\usepackage{amsmath,amssymb,amsthm,mathtools}
\usepackage{booktabs,array,longtable}
\usepackage{microtype}
\usepackage[hidelinks]{hyperref}
\usepackage{xcolor}
\usepackage{enumitem}
\allowdisplaybreaks

\newtheorem{theorem}{Theorem}
\newtheorem{lemma}{Lemma}

\newcommand{\cX}{\mathcal X}
\newcommand{\cY}{\mathcal Y}
\newcommand{\cZ}{\mathcal Z}
\newcommand{\cU}{\mathcal U}
\newcommand{\cV}{\mathcal V}
\newcommand{\supp}{\operatorname{supp}}

\newcommand{\nrect}{\mathrm{nonrect}}
\newcommand{\Rdet}{\mathcal R_{\mathrm{det}}}

\title{Counterexamples to the Markovity Conjecture\\
for the Two-Receiver Broadcast Channel}
\author{Yanxiao Liu\thanks{Yanxiao Liu is with the Department of Electrical and Electronic Engineering, Imperial College London, London, UK. Email: \texttt{y.liu2@imperial.ac.uk}} $\,$
and Mian Huang\thanks{Mian Huang is with the Multimoon Lab, Singapore. Email: \texttt{mhuang5865@gmail.com}}
}

\date{August 12, 2026}

\begin{document}
\maketitle

\begin{abstract}

We present two counterexamples to the \emph{Markovity Conjecture} of Gohari, Liu and Nair (ISIT 2025), which is a structural conjecture concerning the optimizers of the dual functional associated with Marton's inner bound and, if true, would greatly simplify the evaluation of Marton's inner bound.
Both counterexamples are ternary-input broadcast channels with strictly positive transition probabilities and use the same nonrectangular $2\times2$ auxiliary structure.
In each case, we exhibit an explicit non-Markov construction whose objective value is rigorously larger than that achievable by any construction satisfying the conjectured Markov structure.
Both examples are obtained with the assistance of GPT-5.6 Sol and disprove the Markovity Conjecture.
\footnote{The two counterexamples presented in this note were discovered independently by the two authors at approximately the same time (the first by Yanxiao Liu and the second by Mian Huang), before either author became aware of the other's work. The two authors were subsequently put in contact by Prof.~Chandra Nair.}

\end{abstract}

\section{Introduction}

Consider a two-receiver broadcast channel~\cite{cover1972broadcast}, wherein the sender aims to transmit a private input $X\in\cX$ to two receivers $(Y,Z)$ reliably, specified by the marginal channels $W_Y(y| x)$ and $W_Z(z| x)$.  For $\alpha\in[0,1]$ and an input $c=(c_x\in \mathbb{R}^{|\cX|}:x\in\cX)$, define
\begin{equation}
\begin{aligned}
G_\alpha(p;c)
= &-\alpha H(Y)-(1-\alpha)H(Z)
   +I(U;Y)+I(V;Z)-I(U;V)+\mathbb E[c_X],\\
F_\alpha(T;c)
= &\max_{p(u,v,x)}G_\alpha(p;c),
\end{aligned}
\label{eq:functional}
\end{equation}
where the maximization is over distributions satisfying $(U,V)-X-(Y,Z)$.  
This is the sum-rate specialization (i.e., $\lambda=1$ of the weighted sum rate) of the dual characterization considered in \cite{gohari2025conjecture}.

Gohari, Liu and Nair proposed the following structural conjecture for the optimizers of this functional \cite[Conjecture~2]{gohari2025conjecture}: the maximum can always be attained by a distribution satisfying the additional Markov chain
\begin{equation}
U-X-V,
\label{eq:markovity}
\end{equation}
or equivalently, $I(U;V| X)=0$. 
This is referred to as the \emph{Markovity Conjecture} \cite{gohari2025conjecture}.

For the unrestricted maximization in \eqref{eq:functional}, the standard cardinality result gives a useful finite-dimensional reduction.  Indeed,
\[
\begin{aligned}
G_\alpha(p;c)
= &-\alpha I(X;Y)-(1-\alpha)I(X;Z)
   +I(U;Y)+I(V;Z)-I(U;V) -\sum_x d_xp_X(x),
\end{aligned}
\]
where $d_x = \alpha H(Y| X=x) +(1-\alpha)H(Z| X=x) -c_x$. 
Therefore, \cite[Theorem 3]{anantharam2018evaluation} applies with $\delta_0=\delta_1=0$, $\gamma_2=1$, and its parameter $\lambda=\alpha$.  In particular, when $|\cX|=3$, the maximization in \eqref{eq:functional} admits an optimizer satisfying
\begin{equation}
|\cU|+|\cV|\le 4, \quad
X=f(U,V).
\label{eq:card}
\end{equation}
Consequently, the possible cardinality types are a $1\times3$ or $3\times1$ simple branch, or a $2\times2$ deterministic mapping.

Gohari, Liu and Nair further relate the Markovity Conjecture to a \emph{rectangular-mapping} formulation \cite[Proposition~1, Lemma~1 and Remark~2]{gohari2025conjecture}. For a deterministic mapping $X=f(U,V)$, a fiber $f^{-1}(x)$ is called rectangular if it is a Cartesian product of a set of rows and a set of columns.  Their alternate formulation asserts that there is always a maximizer whose nonempty fibers are rectangular.

It is useful, however, to distinguish the literal Markov statement \eqref{eq:markovity} from the deterministic rectangular formulation until the implication between them has been justified for the problem at hand.  In particular, the unrestricted cardinality theorem above guarantees a small deterministic optimizer for the unrestricted optimization, but by itself it does not provide a cardinality or determinism reduction for the distinct constrained optimization over laws satisfying $U-X-V$.  
Section~\ref{sec:bridge} establishes the required bridge directly within the Markov class.  This allows the counterexamples below to refute Conjecture 2 as literally stated, without taking the rectangular-mapping equivalence as an additional assumption.

\section{Bridging Markovity and Rectangularity}
\label{sec:bridge}

In this section, we provide theoretical tools to justify the evaluation of our counterexamples.
For readers who are only interested in the constructions of the counterexamples, please refer directly to Section~\ref{sec::counterexamples}.

For a deterministic encoder $X=f(U,V)$, call $f$ rectangular if every nonempty fiber $f^{-1}(x)$ is a Cartesian product of a set of rows and a set of columns.  Let $\Rdet$ be the class of all finite-alphabet laws with such a deterministic rectangular encoder. 

The numerical certificates for the two counterexamples upper-bound only the deterministic rectangular class $\Rdet$, whereas Conjecture 2 is stated over all stochastic laws satisfying $U-X-V$, with no a priori cardinality or determinism restriction. 
However, there exist two potential issues: 
\begin{enumerate}
    \item arbitrary rectangular auxiliary alphabets must be reduced to the finite map classes covered by the certificate;

    \item the stochastic Markov class must be shown to have finite cardinality and to attain its supremum, and if a Markov law attained the unrestricted optimum, its stochastic encoder would have to be shown deterministic, after which conditional independence and full support force rectangular fibers. 
\end{enumerate}

We provide theoretical tools to resolve the two issues: Lemma~\ref{lem:collapse} addresses the first, and the rest of this section addresses the second.

\begin{lemma}[Three-symbol rectangular collapse]
\label{lem:collapse}
Assume $|\cX|=3$ and $\lambda=1$.  If $X=f(U,V)$ is a finite deterministic rectangular encoder, then there is another rectangular deterministic law with no smaller value of $G_\alpha$ for which either $|\cU|\le2, |\cV|\le2$, or the encoder is one-sided and is covered by the $U=X$ or $V=X$ superposition problem.
\end{lemma}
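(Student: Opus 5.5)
The plan is a combinatorial classification of the fiber pattern, followed by merging auxiliary symbols that are redundant. Write the nonempty fibers as $f^{-1}(x)=R_x\times C_x$ for $x\in\cX=\{1,2,3\}$, restricting $\cU,\cV$ to the support. These at most three combinatorial rectangles partition $\cU\times\cV$.

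\emph{Step 1: classify the partitions.} Take the rectangle containing a fixed cell $(u_0,v_0)$. A short case check shows that some fiber spans either all rows or all columns, say $R_1=\cU$; the row case is symmetric. Then $f^{-1}(2)\cup f^{-1}(3)$ partitions $\cU\times(\cV\setminus C_1)$ into at most two rectangles. These are either side by side, with $R_2=R_3=\cU$, or stacked, with $C_2=C_3=\cV\setminus C_1$. This leaves three patterns:
\begin{enumerate}
\item[(a)] all fibers are full-height, so $X=g(V)$;
\item[(b)] all fibers are full-width, so $X=g(U)$;
\item[(c)] the mixed pattern, in which one fiber is full-height and the other two are stacked in the complementary column block. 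This is, up to symmetry, the $2\times2$ L-shaped map.
\end{enumerate}

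\emph{Step 2: the one-sided cases (a) and (b).} In case (a), replace $V$ by $V'=X=g(V)$ and keep $U$. Because $X$ is a function of $V$, the chain $V-X-Z$ holds, so $I(V;Z)=I(X;Z)=I(V';Z)$. The law of $(U,X,Y)$ is unchanged, so $I(U;Y)$ and the terms depending only on $p_X$ are unchanged. Data processing gives $I(U;V')\le I(U;V)$. Hence $G_\alpha$ does not decrease, and the new law is exactly the $V=X$ superposition problem. Case (b) is symmetric and leads to the $U=X$ problem.

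\emph{Step 3: the mixed case (c).} Say $C_1$ is the full-height block, and in the block $\cV\setminus C_1$ the rows split as $R_2\sqcup R_3$.
\begin{itemize}
\item \emph{Merging columns of $C_1$.} Every column $v\in C_1$ gives $X=1$ deterministically. So $H(Z\mid V=v)=H(Z\mid X=1)$ is the same for all such $v$. Merging all of $C_1$ into a single symbol therefore leaves $H(Z\mid V)$, and hence $I(V;Z)$, unchanged. It leaves $(U,X,Y)$ unchanged and can only decrease $I(U;V)$.
\item \emph{Why rows cannot be merged the same way.} The rows inside $R_2$ are not constant in $X$: each takes value $2$ or $1$ according to the column block. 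Merging them therefore changes $p(y\mid u)$ in general.
\item \emph{Convexity argument for rows.} Write the law as $p(v)\,p(u\mid v)$, with the columns now reduced to the merged column $v_1$ (from $C_1$) and the columns of $\cV\setminus C_1$. Fix $p_V$ and the row-block indicator $B=1\{U\in R_2\}$. Vary $p(u\mid v)$ only by redistributing mass among rows of the same block, keeping $P(B\mid V)$ fixed. Under such moves $p(z\mid v)$ and $p_{X,Y,Z}$ are fixed, and so is $I(B;Y)$. What remains is
\[
I(U;Y)-I(U;V)=I(B;Y)-I(B;V)+\bigl[H(U\mid B,V)-H(U\mid B,Y)\bigr].
\]
The aim is to show this bracket is maximized when $U=B$. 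Since $Y$ depends on $U$ only through $(B,V)$, we have $U-(B,V)-Y$. Hence $H(U\mid B,Y)\ge H(U\mid B,V,Y)=H(U\mid B,V)$, so the bracket is $\le 0$, with equality when $U=B$.
\item \emph{Conclusion of the mixed case.} Replacing $U$ by $B$ does not decrease $G_\alpha$. The same argument with roles swapped handles the columns of $\cV\setminus C_1$ once the rows are binary, since there $Z$ depends on $V$ only through $(B,1\{V\in C_1\})$. The result is a rectangular deterministic map on a $2\times2$ grid.
\end{itemize}

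\emph{Expected obstacle.} The main difficulty is Step 3. It requires checking carefully that the replacement $U\to B$ preserves the joint law of $(V,X,Y,Z)$. This holds because $X=f(U,V)$ depends on $U$ only through $B$, so the law of $(B,V,X,Y,Z)$ is unchanged. It also requires checking that this dependence-only-through-$B$ fact is exactly what the Markov step $U-(B,V)-Y$ needs. If this succeeds, the identical argument also disposes of cases (a) and (b) uniformly.
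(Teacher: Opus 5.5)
Your proof is correct and is essentially the paper's argument: you classify the fiber patterns (via a spanning fiber rather than the paper's row-type count), dispose of one-sided maps by data processing, and merge identical rows and columns using $U-(B,V)-Y$ to get $H(U\mid B,Y)\ge H(U\mid B,V)$, which is exactly the paper's inequality with $U^*=B$. Two cosmetic points: no convexity is actually used in the row step, and the ``short case check'' that one of at most three fibers spans all rows or all columns is true but should be written out.
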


The proof of Lemma \ref{lem:collapse} is given in Appendix \ref{app::lem_collapse}. 
Lemma \ref{lem:collapse} will help to justify the numerical evaluation in Section \ref{sec::counter_eg_1}; more specifically, Lemma \ref{lem:collapse} justifies why it suffices to optimize only finitely many $2\times2$ rectangular maps when $|\cX|=3$, and it is stronger than merely using \eqref{eq:card}.

\begin{theorem}[Markov class cardinality reduction]
\label{thm:markov-cardinality}
Let $|\cX|=m<\infty$.  For every finite-alphabet law satisfying $U-X-V$, there is another law with the same input marginal, still satisfying $U'-X-V'$, such that $|\cU'|\le m, |\cV'|\le m$ and whose value of $G_\alpha$ is no smaller.  Consequently, the supremum over the literal Markov class is attained.
\end{theorem}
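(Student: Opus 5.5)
The plan is to exploit the fact that, under $U-X-V$, the law factorizes as $p(x)\,p(u|x)\,p(v|x)$. For fixed $p_X$ and fixed $p_{V|X}$, the objective then depends on $p_{U|X}$ only through a mixture of a single function over the posteriors $p_{X|U=u}$. That reduces each auxiliary separately to the standard support-lemma setting. We reduce $U$ first with $(p_X,p_{V|X})$ fixed, and then reduce $V$ with the new $p_{U'|X}$ fixed. Each step keeps $p_X$, keeps the Markov chain, and does not decrease $G_\alpha$.

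\emph{Step 1 (rewriting the $U$-dependence).} Fix $p_X$ and $p_{V|X}$. The terms $-\alpha H(Y)-(1-\alpha)H(Z)+\mathbb E[c_X]+I(V;Z)$ depend only on $(p_X,p_{V|X})$. We write
\[
I(U;Y)-I(U;V)=H(Y)-H(V)-H(Y|U)+H(V|U).
\]
Because $Y$ and $V$ are each generated from $X$ by channels that do not depend on $U$, the conditional laws of $Y$ and $V$ given $U=u$ are the images of $q_u:=p_{X|U=u}$ under $W_Y$ and $p_{V|X}$ respectively. Hence
\[
G_\alpha=\mathrm{const}(p_X,p_{V|X})+\sum_u p(u)\,\psi(q_u),
\qquad \psi(q):=H_q(V)-H_q(Y),
\]
where $H_q(\cdot)$ denotes entropy of the output induced by input law $q$. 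The map $\psi$ is continuous on the simplex $\Delta_m$, and the constraint is $\sum_u p(u)q_u=p_X$.

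\emph{Step 2 (support lemma with $m$ points).} Consider the set $S=\{(q,\psi(q)):q\in\Delta_m\}\subset\mathbb R^{m}$, using $m-1$ free coordinates of $q$ plus the value of $\psi$. The point $(p_X,\sum_u p(u)\psi(q_u))$ lies in $\operatorname{conv}(S)$. Since $S$ is the continuous image of the connected set $\Delta_m$, it is connected, so the Fenwick--Eggleston strengthening of Carathéodory applies. It says the point can be written as a convex combination of at most $m$ points $(q'_j,\psi(q'_j))$ with weights $p'(j)$ satisfying $\sum_j p'(j)q'_j=p_X$ and $\sum_j p'(j)\psi(q'_j)=\sum_u p(u)\psi(q_u)$.

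We then define $U'$ with $|\cU'|\le m$ by $p(u'|x)=p'(u')q'_{u'}(x)/p_X(x)$, restricting to $\supp p_X$, which is harmless. Taking $p(u',v,x)=p_X(x)p(u'|x)p(v|x)$, the chain $U'-X-V$ holds, $p_X$ is unchanged, and $G_\alpha$ is unchanged. The symmetric argument with $U'$ fixed, using $\phi(q)=H_q(U')-H_q(Z)$, reduces $V$ to $|\cV'|\le m$.

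\emph{Step 3 (attainment).} By Steps 1--2, the supremum over the literal Markov class equals the supremum over triples $(p_X,p_{U|X},p_{V|X})$ with $|\cU|,|\cV|\le m$. This parameter set is a product of simplices, hence compact, and $G_\alpha$ is continuous on it because entropies are continuous on finite simplices. So the maximum is attained.

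The main obstacle is obtaining the bound $m$ rather than $m+1$ in Step 2. Plain Carathéodory in $\mathbb R^m$ gives only $m+1$ points, so the connectedness of $S$ and the Fenwick--Eggleston theorem are essential. If that is unavailable, the fallback is to note that one can equally preserve only the $m-1$ marginal constraints while \emph{not decreasing} the linear functional $\psi$. This is a linear program over the measures $p'$ on the finitely many existing posteriors $q_u$, with $m$ equality constraints (including total mass), so an optimal basic solution has at most $m$ nonzero weights. This linear-programming argument also gives the bound $m$ without any topology.
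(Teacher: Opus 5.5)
Your proposal is correct, and its skeleton is the same as the paper's. Both factor the law as $p_X(x)A(u|x)B(v|x)$ and observe that, with $(p_X,B)$ fixed, the $U$-dependent part of $G_\alpha$ is a weighted average $\sum_u p(u)\psi(q_u)$ of a fixed function of the posteriors, subject to $\sum_u p(u)q_u=p_X$. The paper writes this function as $D(qW_Y\Vert p_Y)-D(qB\Vert p_V)$, which differs from your $\psi$ only by an affine term that is constant under the barycenter constraint. Both then reduce $U$, reduce $V$ with the new $A$ fixed, and finish by compactness.

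The difference is in the key lemma. The paper's argument is exactly your fallback: it keeps the finitely many existing posteriors and treats the weights as an LP with $m$ equality constraints. An optimal vertex has at most $m$ positive entries. This is purely finite-dimensional linear algebra and gives only ``no smaller,'' which is all the theorem claims.

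Your primary route uses the Fenwick--Eggleston support lemma to choose new posteriors from the whole simplex, and it preserves $G_\alpha$ exactly. It costs a topological theorem but buys generality. It applies essentially verbatim when the original auxiliary is not finite, because the mean of $(q_U,\psi(q_U))$ still lies in the convex hull of the compact connected set $S$. The LP argument, by contrast, is tailored to a finite starting list of posteriors, which the theorem's hypothesis supplies.

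So your remark that connectedness is ``essential'' for the bound $m$ holds only for the exact-preservation version. For the inequality actually stated, the LP argument suffices, as you yourself note.
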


The proof of Theorem~\ref{thm:markov-cardinality} is given in Appendix~\ref{app::markov_cardinality}. By Theorem~\ref{thm:markov-cardinality}, the maximum over the Markov-constrained class is attained with $|\cU|,|\cV|\le |\cX|$.  
Hence, if a Markov law attains $F_\alpha(T;c)$, we may take its auxiliaries to be finite and apply the determinism arguments below.

\begin{lemma}
\label{lem:convex-decomposition}
Fix a finite law $s(u,v)=p(u,v)$ and regard $G_\alpha(s,K)$ as a function of the stochastic encoder $K_{uv}(x)=p(x| u,v)$. 
Then $G_\alpha(s,K)$ is convex in $K$.  If $(s,K)$ is an unrestricted global maximizer, every positive-weight deterministic component in the canonical product decomposition of $K$ is also an unrestricted global maximizer.
\end{lemma}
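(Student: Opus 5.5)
With the law $s(u,v)$ of $(U,V)$ fixed, I would first show that each term of $G_\alpha$ is either affine or convex in the stochastic encoder $K$. The convexity claim then follows immediately, and the second claim is the usual statement that a convex function on a polytope attains its maximum at extreme points.

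\textbf{Step 1: term-by-term structure.} For fixed $s$, the joint law of $(U,V,X,Y,Z)$ is $s(u,v)K_{uv}(x)W_Y(y|x)W_Z(z|x)$, which is linear in $K$. I would rewrite $G_\alpha$ as
\[
G_\alpha=-\alpha H(Y)-(1-\alpha)H(Z)+H(Y)-H(Y|U)+H(Z)-H(Z|V)-I(U;V)+\mathbb E[c_X].
\]
The term $I(U;V)$ depends only on $s$, so it is constant in $K$. The term $\mathbb E[c_X]$ is linear in $K$. The unconditional entropies combine to $(1-\alpha)H(Y)+\alpha H(Z)$, and each is a concave function of an output law that is linear in $K$. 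This is the wrong sign for convexity, so a naive term-by-term argument does not close.

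\textbf{Step 2: the obstacle.} I expect that Step 1 is where the argument could break. If the statement holds, the resolution should be that for fixed $s$ the map $K\mapsto G_\alpha$ is convex in the following sense. Write $-H(Y|U)=\sum_u s(u)\sum_y p(y|u)\log p(y|u)$, where $p(y|u)=\sum_v s(v|u)\sum_x K_{uv}(x)W_Y(y|x)$ is linear in $K$. The function $q\mapsto\sum_y q\log q$ is convex, so $-H(Y|U)$ is convex in $K$, and likewise $-H(Z|V)$.

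The remaining terms $(1-\alpha)H(Y)+\alpha H(Z)$ are concave, which is a genuine problem. I would first try to absorb them through the marginal: $H(Y)$ depends on $K$ only through $p_X$. I would then check whether the lemma intends $s$ to be fixed together with $p_X$ along the decomposition. If the components of the canonical decomposition are not forced to share $p_X$, the convexity claim has to be established by a separate argument rather than by this term-by-term bound. This is the step I expect to be hardest.

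\textbf{Step 3: canonical product decomposition.} Assuming convexity, I would define the canonical product decomposition by writing $K$ as a product measure over the coordinates $(u,v)$:
\[
K=\sum_{f}\Big(\prod_{u,v}K_{uv}(f(u,v))\Big)\,\delta_f,
\]
where the sum runs over all maps $f:\cU\times\cV\to\cX$ and $\delta_f$ is the deterministic encoder given by $f$. The weights are nonnegative and sum to one. By convexity,
\[
G_\alpha(s,K)\le\sum_f w_f\,G_\alpha(s,\delta_f)\le\sum_f w_f\,F_\alpha=F_\alpha .
\]
If $(s,K)$ is an unrestricted global maximizer, then $G_\alpha(s,K)=F_\alpha$, so equality holds throughout. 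Each term with $w_f>0$ must therefore satisfy $G_\alpha(s,\delta_f)=F_\alpha$. Since each $(s,\delta_f)$ is a feasible law, every positive-weight deterministic component is itself an unrestricted global maximizer.
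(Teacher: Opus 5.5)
There is a genuine gap: the convexity of $G_\alpha(s,\cdot)$ is never established. You flag it as ``a genuine problem'' and then proceed ``Assuming convexity'', so the first claim of the lemma is unproved and the second rests on it. The obstacle comes from your own rewriting. By expanding $I(U;Y)=H(Y)-H(Y|U)$ and $I(V;Z)=H(Z)-H(Z|V)$ and then merging the unconditional entropies into $(1-\alpha)H(Y)+\alpha H(Z)$, you split convex quantities into concave and convex pieces whose sum has no visible curvature sign.

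The missing idea is to keep the mutual informations intact. With $s$ fixed, the input laws $p_U$ and $p_V$ are fixed. The channels $p(y|u)=\sum_v s(v|u)\sum_x K_{uv}(x)W_Y(y|x)$ and $p(z|v)$ are affine in $K$. Mutual information is convex in the channel for a fixed input law; equivalently, $I(U;Y)=\sum_u s(u)\,D(p_{Y|U=u}\Vert p_Y)$, where both arguments are affine in $K$ and $D$ is jointly convex. Hence $I(U;Y)$ and $I(V;Z)$ are convex in $K$. The remaining terms behave as follows:
\begin{itemize}
\item $-\alpha H(Y)$ and $-(1-\alpha)H(Z)$ are convex, because $\alpha\in[0,1]$ and $p_Y,p_Z$ are affine in $K$;
\item $I(U;V)$ is constant;
\item $\mathbb E[c_X]$ is affine.
\end{itemize}
In your notation this is just the regrouping $(1-\alpha)H(Y)-H(Y|U)=I(U;Y)-\alpha H(Y)$, and similarly for $Z$. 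That is the paper's entire convexity argument.

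Your suggestion that $p_X$ might have to be held fixed along the decomposition is a red herring. The deterministic components $\delta_f$ generally induce different input marginals, and convexity in $K$ holds regardless. Once convexity is in hand, your Step 3 is correct and coincides with the paper's proof: product weights, then Jensen, then the sandwich $G_\alpha(s,K)\le\sum_f w_f G_\alpha(s,\delta_f)\le F_\alpha$. The paper takes the product only over $\supp s$, but this makes no difference to the values $G_\alpha(s,\delta_f)$.
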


The proof of Lemma \ref{lem:convex-decomposition} is given in Appendix \ref{app::convex-decomposition}. 
Lemma \ref{lem:convex-decomposition} provides a convex deterministic decomposition, but convexity alone only guarantees that deterministic components of a maximizing stochastic encoder are also globally optimal, and does not suffice for the bridge: those deterministic components need not retain the original Markov condition $U-X-V$. 
The next lemma therefore studies the equality case of this convex decomposition and forces the original maximizing kernel $K$ itself to be deterministic on every active auxiliary cell.

\begin{lemma}[Strict-Jensen determinism]
\label{lem:strict-jensen}
Assume that, for every pair $x\ne x'$, the weighted output-visibility condition
\begin{equation}
\alpha\Vert W_Y(\cdot| x)-W_Y(\cdot| x')\Vert_2^2
+(1-\alpha)\Vert W_Z(\cdot| x)-W_Z(\cdot| x')\Vert_2^2 > 0
\label{eq:visibility}
\end{equation}
holds.  If $(s,K)$ is an unrestricted global maximizer, then $K_{uv}$ is a point mass for every $(u,v)$ in the support of $s$. 
\end{lemma}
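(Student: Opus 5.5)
The plan is a direct second-order argument. I would show that, with $s$ held fixed, $G_\alpha(s,\cdot)$ is \emph{strictly} convex along every feasible perturbation that moves conditional mass between two inputs $x\ne x'$ inside a single active cell. Then a maximizer cannot sit in the relative interior of such a segment.

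Concretely, suppose that for some $(u_0,v_0)$ with $s(u_0,v_0)>0$ the kernel $K_{u_0v_0}$ is not a point mass. Pick $x\ne x'$ with $K_{u_0v_0}(x),K_{u_0v_0}(x')>0$. Define $K^{(t)}$ by adding $t$ to $K_{u_0v_0}(x)$ and subtracting $t$ from $K_{u_0v_0}(x')$, leaving all other cells unchanged. For $|t|$ small this is a valid stochastic encoder with the same $s$, so $I(U;V)$ is constant along the path. It then suffices to prove $\varphi''(0)>0$ for $\varphi(t)=G_\alpha(s,K^{(t)})$. Indeed, $\varphi(0)<\tfrac12(\varphi(t)+\varphi(-t))\le\max\{\varphi(t),\varphi(-t)\}$ for small $t>0$, which contradicts global maximality. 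This is the strict-equality case of the convexity in Lemma~\ref{lem:convex-decomposition}, but I would compute it directly rather than go through the canonical product decomposition.

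For the computation I would regroup the objective as
\[
G_\alpha=\bigl[(1-\alpha)H(Y)-H(Y|U)\bigr]+\bigl[\alpha H(Z)-H(Z|V)\bigr]-I(U;V)+\mathbb E[c_X].
\]
The last two terms are constant and linear in $t$, respectively. Write $s_0=s(u_0,v_0)$, $D_Y=W_Y(\cdot|x)-W_Y(\cdot|x')$ and $D_Z=W_Z(\cdot|x)-W_Z(\cdot|x')$. Along the path, $p(y)$ and $p(u_0)p(y|u_0)$ both move linearly with velocity $s_0D_Y(y)$. All other rows $p(y|u)$, $u\ne u_0$, are fixed. Hence the $Y$-bracket has second derivative
\[
\sum_y s_0^2D_Y(y)^2\Bigl[\frac{1}{p(u_0)p(y|u_0)}-\frac{1-\alpha}{p(y)}\Bigr]\;\ge\;\alpha\sum_y\frac{s_0^2D_Y(y)^2}{p(u_0,y)}\;\ge\;\alpha\, s_0^2\Vert D_Y\Vert_2^2 .
\]
The first inequality uses $p(y)\ge p(u_0,y)$. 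The second uses $p(u_0,y)\le1$.

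Symmetrically, the $Z$-bracket contributes at least $(1-\alpha)s_0^2\Vert D_Z\Vert_2^2$, with $v_0$ in place of $u_0$. Summing the two bounds gives $\varphi''(0)\ge s_0^2$ times the left side of \eqref{eq:visibility}, which is strictly positive. This yields the contradiction, so $K_{u_0v_0}$ must be a point mass on every active cell.

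The main obstacle is technical rather than conceptual: justifying the differentiation when some output probabilities vanish. The worry is that $p(u_0,y)=0$ on a $y$ with $D_Y(y)\ne0$. This cannot actually happen at an interior point. If $D_Y(y)\neq0$, then one of $W_Y(y|x)$ and $W_Y(y|x')$ is positive, and both $x$ and $x'$ carry positive mass in cell $(u_0,v_0)$, so $p(u_0,y)>0$. The same holds for $p(y)$, and likewise for $Z$. Outputs with $D_Y(y)=0$ contribute nothing, since those probabilities stay constant in $t$. So every relevant entropy summand is smooth near $t=0$, and the second derivative above is legitimate. In the channels of interest the transition probabilities are strictly positive anyway, so this reduces to a remark.
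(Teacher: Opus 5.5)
Your proof is correct, but it follows a different route from the paper's.

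\textbf{The paper's route.} The paper does not differentiate. It takes the canonical product decomposition $K=\sum_f\rho_fK_f$ from Lemma~\ref{lem:convex-decomposition} and uses global optimality to force the total Jensen gap to vanish. It then splits that gap into the separately nonnegative gaps of the convex terms $-\alpha H(Y)$, $-(1-\alpha)H(Z)$, $I(U;Y)$ and $I(V;Z)$. By strict concavity of entropy, every positive-weight deterministic component must have the same $Y$-marginal when $\alpha>0$ and the same $Z$-marginal when $\alpha<1$. Two components that differ only at the randomized cell then give $s_0D_Y=0$ and $s_0D_Z=0$, which contradicts \eqref{eq:visibility}.

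\textbf{Your route.} You regroup $G_\alpha$ as $(1-\alpha)H(Y)-H(Y|U)+\alpha H(Z)-H(Z|V)$ plus constant and linear terms, and you perturb along a single one-cell segment. This replaces the exponentially large decomposition with an explicit curvature bound,
\[
\varphi''(0)\ge s_0^2\bigl[\alpha\Vert D_Y\Vert_2^2+(1-\alpha)\Vert D_Z\Vert_2^2\bigr].
\]
It is therefore slightly stronger. It only needs $(s,K)$ to be a local maximizer of $G_\alpha(s,\cdot)$. The paper's Jensen-gap argument needs $G_\alpha(s,K)$ to dominate the value of every positive-weight deterministic component $(s,K_f)$.

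\textbf{What each costs.} Your method requires a smoothness check at vanishing output probabilities, and you handle it correctly. Outputs with $D_Y(y)\neq0$ have $p(u_0,y)\ge s_0[K_{u_0v_0}(x)W_Y(y|x)+K_{u_0v_0}(x')W_Y(y|x')]>0$ near $t=0$, and the remaining outputs are constant along the path. The paper's derivative-free argument avoids this check entirely.
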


The proof of Lemma \ref{lem:strict-jensen} is given in Appendix \ref{app::proof_strict_jensen}. 
It is the key equality-case argument we can utilize: at a global maximum, the convex decomposition from Lemma~\ref{lem:convex-decomposition} must have zero total Jensen gap. 
Strict concavity of output entropy then forces all positive-weight deterministic components to induce the same positively weighted output marginals.  If one active cell randomized between two distinguishable input symbols, two deterministic components differing only at that cell would induce different output marginals, contradicting \eqref{eq:visibility}. 

We now give the key bridge between the literal Markovity condition and the
rectangular formulation used in our numerical certification.

\begin{theorem}
\label{thm:bridge}
Consider a finite-input broadcast channel whose two receiver marginals are
strictly positive and which is irreducible, meaning that no two input symbols
induce the same pair of receiver-marginal rows. 
Fix a potential vector $c$ and assume $0<\alpha<1$.
The following are equivalent:
\begin{enumerate}[label=(\roman*)]
\item an unrestricted global maximizer of $G_\alpha$ satisfies $U-X-V$;
\item some unrestricted global maximizer is realized by a deterministic
rectangular encoder $X=f(U,V)$.
\end{enumerate}
\end{theorem}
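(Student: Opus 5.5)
We prove the two implications separately. Throughout, the irreducibility assumption together with $0<\alpha<1$ gives the output-visibility condition \eqref{eq:visibility}: for $x\ne x'$ the pairs of receiver rows differ, so at least one of $\Vert W_Y(\cdot|x)-W_Y(\cdot|x')\Vert_2^2$ and $\Vert W_Z(\cdot|x)-W_Z(\cdot|x')\Vert_2^2$ is positive, and both weights $\alpha,1-\alpha$ are positive. Lemma~\ref{lem:strict-jensen} is therefore available.

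\emph{(i)$\Rightarrow$(ii).} Let $p(u,v,x)$ be an unrestricted global maximizer with $U-X-V$. By Theorem~\ref{thm:markov-cardinality} we may take $\cU,\cV$ finite. We must argue that this step preserves global optimality. The theorem produces a Markov law with no smaller value, and the original value is already the global maximum $F_\alpha(T;c)$. Hence the new law is still an unrestricted global maximizer, and it still satisfies $U-X-V$. We then restrict $\cU,\cV$ to symbols of positive probability, so that $s(u)>0$ and $s(v)>0$. Lemma~\ref{lem:strict-jensen} now says that $K_{uv}$ is a point mass on every $(u,v)\in\supp s$. So $X=f(U,V)$ on the support of $s$, and the remaining step is to show that the fibers are rectangular.

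The Markov condition gives
\[
p(u,v)=\sum_x p(x)\,p(u|x)\,p(v|x),
\]
and determinism on the support means that each supported cell $(u,v)$ is reached from a single $x=f(u,v)$. Set $A_x=\{u:p(u|x)>0\}$ and $B_x=\{v:p(v|x)>0\}$. The support of $p(u,v|x)=p(u|x)p(v|x)$ is exactly $A_x\times B_x$. Consequently $\supp s=\bigcup_x A_x\times B_x$, and determinism forces these rectangles to be pairwise disjoint, with $f\equiv x$ on $A_x\times B_x$.

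The main obstacle is the cells $(u,v)\notin\supp s$, where $f$ is undefined. We need to extend $f$ to a total deterministic map whose nonempty fibers are exactly the rectangles $A_x\times B_x$. I would try two routes.
\begin{enumerate}
\item Show that the rectangles tile the whole grid. For any row $u$, the cells $(u,v)$ with $v$ supported are covered by those $B_x$ with $u\in A_x$. I would try to exploit that a zero-probability cell has no effect on $G_\alpha$, so any extension is allowed only if it keeps fibers rectangular.
\item If the rectangles do not tile the grid, split the auxiliary symbols instead: relabel $U$ by the pair $(u,\text{block})$ and similarly $V$. This does not change the joint law of $(U,V,X)$ up to relabeling, and hence does not change $G_\alpha$.
\end{enumerate}
The cleanest version of route 2: because the disjoint rectangles arise from a product decomposition, $s$ is block-structured. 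One can then pass to the law with $U'=(U,X)$, $V'=(V,X)$, with $f(U',V')=X$ read off either coordinate. Its fibers are $(A_x\times\{x\})\times(B_x\times\{x\})$, which are rectangular. I would verify that $I(U';Y)=I(U;Y)$, and similarly for $V'$ and for $I(U';V')$, using the fact that $X$ is a function of $(U,V)$ on the support. Which relabeling works depends on the paper's convention for fibers on the full grid versus on the support.

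\emph{(ii)$\Rightarrow$(i).} Let $X=f(U,V)$ be a deterministic rectangular global maximizer with fibers $A_x\times B_x$. Then
\[
p(u,v|x)=\frac{s(u,v)}{p(x)}\quad\text{on } A_x\times B_x .
\]
This conditional need not factor in general, so (i) needs more than determinism. Here we use optimality. Replace $s$ on each fiber by the product of its marginals, $s'(u,v)=p(x)\,s(u|x)\,s(v|x)$. This keeps $p(x)$, $p(u,x)$ and $p(v,x)$ fixed, so $H(Y)$, $H(Z)$, $\mathbb E[c_X]$, $I(U;Y)$ and $I(V;Z)$ are unchanged. It also satisfies $U-X-V$. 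It remains to show that $I(U;V)$ does not increase. On a rectangular deterministic map, $H(U,V)=H(X)+H(U,V|X)$, and $H(U,V|X)$ is maximized by the product. Hence
\[
I(U;V)=H(U)+H(V)-H(U,V)
\]
can only decrease, while $H(U)$ and $H(V)$ are fixed. So $s'$ is again a global maximizer, and it satisfies $U-X-V$, which gives (i).
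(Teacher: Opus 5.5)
Your (ii)$\Rightarrow$(i) argument is the paper's conditional Markovization. In (i)$\Rightarrow$(ii), everything is correct up to determinism and the description $\supp s=\bigcup_x A_x\times B_x$ by pairwise disjoint rectangles. The gap is the step you flag yourself. Nothing in your argument shows that these rectangles cover every cell of the grid of positive-probability labels, and neither route closes this.

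Route~2 fails because $U'=(U,X)$ is not a relabeling of $U$: a row $u$ can lie in several $A_x$, so the law genuinely changes. From $U-X-Y$ you get $I(U';Y)=I(X;Y)$, and similarly $I(V';Z)=I(X;Z)$, while $U-X-V$ gives $I(U';V')=H(X)$. The auxiliary part of the objective therefore becomes $I(X;Y)+I(X;Z)-H(X)$. For strictly positive $W_Z$ and nonconstant $X$, this is strictly below the value $I(X;Y)$ of the choice $U=X$, $V$ constant, so the new law is not a global maximizer.

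Route~1 cannot work by combinatorics alone. With four inputs, take the disjoint rectangles $\{0\}\times\{0,1\}$, $\{0,1\}\times\{2\}$, $\{2\}\times\{1,2\}$ and $\{1,2\}\times\{0\}$ in a $3\times3$ grid. They give a legitimate law that is Markov and deterministic on its support, whose only hole is $(1,1)$. Every choice of $f(1,1)$ produces a nonrectangular fiber.

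The missing ingredient is an optimality argument that excludes holes. This is exactly where the strict positivity hypothesis enters, and your proof never uses it. The paper observes that, at its fixed $p_X$, the law must maximize $I(U;Y)+I(V;Z)-I(U;V)$. It then invokes the Gohari--El Gamal--Anantharam full-support theorem \cite{gohari2014marton}: for strictly positive channels, $p(u,v)>0$ whenever $p(u)p(v)>0$.

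You can also prove this directly. Suppose $s(u_0,v_0)=0$, and consider the mixture $(1-\epsilon)p+\epsilon\,\delta_{(u_0,v_0,x_0)}$ on $(U,V,X)$. It raises $H(U,V)$ by $\epsilon\log(1/\epsilon)+O(\epsilon)$. Meanwhile $H(Y)$, $H(Z)$, $H(U,Y)$, $H(V,Z)$ and $\mathbb E[c_X]$ change only by $O(\epsilon)$, because $W_Y,W_Z>0$ and $p(u),p(v)>0$ keep every probability entering these entropies bounded away from zero. By the expanded form of $G_\alpha$, this contradicts global optimality.

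With full support, every cell lies in exactly one $A_x\times B_x$. Hence $f^{-1}(x)=A_x\times B_x$ and $f$ is rectangular, which completes (i)$\Rightarrow$(ii).
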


The proof of Theorem~\ref{thm:bridge} is given in Appendix~\ref{app::bridge}.
The theorem provides the link between our numerical rectangular-map certificate and the literal Markovity Conjecture. 
Indeed, if a Markov global maximizer existed, Theorem~\ref{thm:markov-cardinality} allows it to be taken with finite auxiliaries.  The strict-Jensen argument then makes its encoder deterministic, while the full-support theorem together with $U-X-V$ forces the resulting deterministic encoder to be rectangular.  Conversely, conditional Markovization of a rectangular global maximizer preserves the deterministic input mapping and cannot decrease the objective, and hence produces a Markov global maximizer.  Therefore,
\[
\begin{gathered}
G_\alpha(p_{\nrect};c)>
\sup_{p\in\Rdet}G_\alpha(p;c)
\quad\Longrightarrow\quad
\text{no rectangular global maximizer}\\
\Longrightarrow\quad
\text{no Markov global maximizer}
\quad\Longrightarrow\quad
\max_{p:\,U-X-V}G_\alpha(p;c)<F_\alpha(T;c),
\end{gathered}
\]
where the last implication uses attainment of the Markov-constrained maximum from Theorem~\ref{thm:markov-cardinality}.

The assumptions in Theorem~\ref{thm:bridge} enter only through these steps: strict positivity is required by the full-support theorem, while irreducibility and $0<\alpha<1$ imply the strict visibility condition used in Lemma~\ref{lem:strict-jensen}.  For $|\cX|=3$, Lemma~\ref{lem:collapse} additionally reduces arbitrary deterministic rectangular architectures to the finite $2\times2$ and one-sided classes used in our numerical certification.

We then provide details of our two counterexamples.
All numerical constants below are interpreted as exact decimal rationals and all logarithms are natural.

\section{Counterexamples}
\label{sec::counterexamples}

\subsection{Counterexample 1}
\label{sec::counter_eg_1}

Let $\cX=\cY=\cZ=\{0,1,2\}$ and define a two-receiver broadcast channel by the conditionally independent coupling
\[
T(y,z| x)=W_Y(y| x)W_Z(z| x),
\]
where
\begin{equation}
W_Y=
\begin{pmatrix}
0.455161376698163&0.373817924805129&0.171020698496708\\
0.758467698206795&0.118614186418326&0.122918115374879\\
0.097309700862227&0.685592439382430&0.217097859755343
\end{pmatrix},
\label{eq:WY}
\end{equation}
\begin{equation}
W_Z=
\begin{pmatrix}
0.424883334335073&0.180039883258550&0.395076782406377\\
0.446958968150175&0.019731351926639&0.533309679923186\\
0.278586009551374&0.595124632894382&0.126289357554244
\end{pmatrix}.
\label{eq:WZ}
\end{equation}
Every channel entry is strictly positive. 
We set
\begin{align}
\alpha &= 0.476379697742613, \\
c & = (0,-0.293271210057665,-0.179371151080177).
\label{eq:alpha-cost}
\end{align}

Take binary $U,V$ with the nonrectangular map and joint law
\begin{equation}
 f(u,v)=
 \begin{pmatrix}0&1\\2&0\end{pmatrix},
\qquad
p_{UV}=
\begin{pmatrix}
0.015284621666388&0.222612061454034\\
0.092282481882513&0.669820834997065
\end{pmatrix}.
\label{eq:candidate}
\end{equation}
The four probabilities in \eqref{eq:candidate} sum exactly to one. 
The induced input law is
\begin{equation*}
p_X=(0.685105456663453,\;0.222612061454034,\;0.092282481882513).
\end{equation*}
The mapping is nonrectangular because the two cells mapped to $x=0$ are diagonal.  Moreover,
\begin{equation*}
I(U;V| X)
=(p_{00}+p_{11})h_2\!\left(\frac{p_{00}}{p_{00}+p_{11}}\right)
=0.073236062052042086\ldots>0,
\end{equation*}
so this explicit auxiliary law does not satisfy $U-X-V$.

A 192-bit outward-rounded MPFR evaluation gives
\begin{equation}
G_\alpha(p_{\nrect};c)
\in
[-1.0280825802447346,\,-1.0280825802447344].
\label{eq:target-interval}
\end{equation}

We then write a $2\times 2$ deterministic map in row-major form as $m=(m_{00},m_{01},m_{10},m_{11})\in\{0,1,2\}^4$.  Exactly $39$ of the $3^4=81$ maps are rectangular.  Relabeling the two rows or the two columns does not change the optimized value, and the $39$ maps form the following $15$ orbits:
\begin{equation}
\begin{gathered}
0000,\ 0011,\ 0012,\ 0022,\ 0101,\ 0102,\ 0121,\ 0122,\\
0202,\ 0211,\ 0212,\ 1111,\ 1122,\ 1212,\ 2222.
\end{gathered}
\label{eq:orbits}
\end{equation}
The two remaining cardinality types in \eqref{eq:card} are the simple branches $U=X,V=\mathrm{const}$ and $V=X,U=\mathrm{const}$.  Hence \eqref{eq:orbits} plus these two branches are exhaustive for the equivalent rectangular formulation of the conjecture.

By Lemma~\ref{lem:collapse}, this finite list in fact covers every finite deterministic rectangular architecture, not only the rectangular maps that happen to lie inside the unrestricted cardinality reduction \eqref{eq:card}.

For a fixed map $m$ and $p=(p_{00},p_{01},p_{10},p_{11})\in\Delta_3$, expansion of the mutual informations yields
\begin{equation}
 g_m(p)
=(1-\alpha)H(Y)+\alpha H(Z)-H(U,Y)-H(V,Z)+H(U,V)+\mathbb E[c_X].
\label{eq:expanded}
\end{equation}
Every probability entering the entropies in \eqref{eq:expanded} is a linear function of $p$.  Thus each map requires a global optimization over only the three-dimensional probability simplex.

The simple branches reduce to global optimization over $q\in\Delta_2$ of
\begin{equation}
\begin{aligned}
g_{U=X}(q)
= &(1-\alpha)H(qW_Y)-(1-\alpha)H(qW_Z)
-\sum_xq_xH(W_Y(\cdot| x))+c^\top q,\\
g_{V=X}(q)
= &\alpha H(qW_Z)-\alpha H(qW_Y)
-\sum_xq_xH(W_Z(\cdot| x))+c^\top q.
\end{aligned}
\label{eq:simple}
\end{equation}

\subsubsection*{Interval branch-and-bound}

For each deterministic map, ordinary numerical optimization only produces feasible values and therefore lower bounds on the class maximum.  It cannot exclude an undiscovered interior maximizer, a boundary maximizer, or a vertex strategy.  To prove a counterexample, we instead need a certified \emph{upper} bound on the objective over the entire probability simplex for every rectangular class.  The interval branch-and-bound calculation supplies exactly this missing direction of inequality.  It is sufficient to certify a separating threshold; the exact optimum of each rectangular class need not be computed. 
We elaborate as follows.

Let $h(t)=-t\log t$, with $h(0)=0$.  On a simplex cell, each linear marginal $q$ has a certified range $[\ell,u]$.  Positive entropy terms are upper-bounded by a tangent at any $t_0>0$:
\begin{equation}
 h(q)\le h(t_0)+h'(t_0)(q-t_0)
=t_0+(-\log t_0-1)q.
\label{eq:tangent}
\end{equation}
Negative entropy terms are upper-bounded using the concavity chord lower bound
\begin{equation}
 h(q)\ge h(\ell)+\frac{h(u)-h(\ell)}{u-\ell}(q-\ell),
\qquad q\in[\ell,u].
\label{eq:chord}
\end{equation}

When $\ell=u$, the corresponding entropy contribution is evaluated directly.  After applying \eqref{eq:tangent}--\eqref{eq:chord}, the objective has an affine upper bound on the cell, so its maximum occurs at a cell vertex.  The initial tetrahedron $\Delta_3$ is recursively partitioned by exact dyadic longest-edge bisection.  Every quantity used in a certified upper bound, every logarithm, and every pruning comparison is evaluated with 192-bit MPFR and directed rounding.  Ordinary binary64 arithmetic is used only to choose a subdivision edge and a convenient tangent location; these choices affect efficiency but not the validity of the global tangent or interval upper bound.  
A cell is discarded only after its certified upper bound is strictly below the exact decimal threshold
\begin{equation}
\tau=-1.0284.
\label{eq:threshold}
\end{equation}
The same construction, on triangular cells, certifies the two functions in \eqref{eq:simple}.

Every one of the $15$ rectangular orbit representatives and both simple branches exhausts its branch-and-bound queue below \eqref{eq:threshold}.  Therefore
\begin{equation}
\sup_{p\in\Rdet}G_\alpha(p;c)<-1.0284.
\label{eq:rect-upper}
\end{equation}
The closest two numerically identified rectangular competitors are the representatives $0212$ and $0211$; both were also independently rerun at 256-bit precision for the same threshold $-1.0284$, with identical certification decisions and subdivision counts.  The conclusion in \eqref{eq:rect-upper} is a threshold certificate, not a claim that the exact rectangular optimum equals $-1.0284$. 

We provide detailed per-class numerical estimates and the high-precision stationary-point values in Appendix~\ref{app:numerical-values}.

For this channel, every marginal transition probability is positive,
\[
0<\alpha<1,
\qquad
\det W_Y
=0.003238095695267287620967209624\ldots\ne0.
\]
Hence the channel is irreducible and satisfies the strict visibility condition.  
Theorem~\ref{thm:bridge}, together with Lemma~\ref{lem:collapse} and \eqref{eq:strict-separation}, now gives the literal conclusion without invoking the rectangular reformulation as an additional premise: for this channel, we have 
\[
\max_{p:\,U-X-V}G_\alpha(p;c)
< F_\alpha(T;c).
\]
In particular, no unrestricted global optimizer satisfies $U-X-V$.

\subsection{Counterexample 2}
We now give a second example, independently certified by exact-rational box bounds and outward-rounded MPFR logarithms.  Set
\begin{equation}
\alpha_2=0.5390910636020981,
\qquad
c^{(2)}=(0,-0.38384735442473183,-0.17505330254261003).
\label{eq:second-alpha-cost}
\end{equation}
Let
\begin{equation}
W_Y^{(2)}=
\begin{pmatrix}
0.36951853194252404&0.3405540519828177&0.28992741607465826\\
0.7644062078066599&0.1243586697987073&0.1112351223946328\\
0.13431916897739554&0.4759630993912907&0.38971773163131376
\end{pmatrix},
\label{eq:second-WY}
\end{equation}
and
\begin{equation}
W_Z^{(2)}=
\begin{pmatrix}
0.4505557317847065&0.21189067443299553&0.33755359378229797\\
0.4099310202669974&0.009896427870306724&0.580172551862695876\\
0.2853422866067611&0.5630135058030781&0.1516442075901608
\end{pmatrix}.
\label{eq:second-WZ}
\end{equation}
The third entry in each row is defined as one minus the first two, so all displayed parameters are exact rationals.  The joint broadcast channel may again be taken as the conditionally independent coupling of its two marginals.

Use the same-style mapping
\begin{equation}
f_2(u,v)=\begin{pmatrix}0&1\\2&0\end{pmatrix}
\label{eq:second-map}
\end{equation}
and the exact masses
\begin{equation}
\begin{aligned}
p_{00}&=0.000587496203024460836846347107816474063425697817587551938108378,\\
p_{01}&=0.0770366662163001055869931249615851671884377052275332967227301,\\
p_{10}&=0.0130832457279635243913988414691657482185168189690949868984089,\\
p_{11}&=1-p_{00}-p_{01}-p_{10}\\
&=0.909292591852711909184761686461432610529619777985784164440752622.
\end{aligned}
\label{eq:second-masses}
\end{equation}
All four masses are strictly positive, while the fiber of $x=0$ consists of the two diagonal cells.  Hence this deterministic law is nonrectangular and cannot satisfy $U-X-V$.
A directed-rounding evaluation gives
\begin{equation}
G_{\alpha_2}(p ;c^{(2)})
\ge -1.07520631015658740221301.
\label{eq:second-lower}
\end{equation}

Lemma~\ref{lem:collapse} reduces arbitrary rectangular auxiliary cardinalities to six genuinely two-sided $2\times2$ orbits and the two one-sided superposition problems.  
The exact-rational interval calculation gives the following rigorous upper bounds.

\begin{table}[h!]
\centering
\small
\renewcommand{\arraystretch}{1.12}
\begin{tabular}{lrrr}
\toprule
Branch & Full-dimensional boxes & Total incl. boundary & Rigorous upper bound \\
\midrule
$0012$ & $215{,}711$ & $259{,}949$ & $-1.07520631031222814389419$ \\
$0102$ & $65{,}488$ & $89{,}445$ & $-1.07520632019360184882798$ \\
$0121$ & $5{,}545$ & $7{,}195$ & $-1.07520652018658400972301$ \\
$0122$ & $3{,}228$ & $4{,}533$ & $-1.07520684613587786915761$ \\
$0211$ & $656{,}940$ & $679{,}714$ & $-1.07520631018369964615548$ \\
$0212$ & $346{,}872$ & $369{,}702$ & $-1.07520631033991559635892$ \\
Superposition $U$ & $55$ & $71$ & $-1.07520796893181037660639$ \\
Superposition $V$ & $30$ & $41$ & $-1.07520799820420479815017$ \\
\bottomrule\\
\end{tabular}
\caption{Directed-interval upper bounds for all deterministic rectangular architectures in the second instance.}
\label{tab:second-certificate}
\end{table}

In particular,
\begin{equation}
\sup_{p\in\Rdet}G_{\alpha_2}(p;c^{(2)})
\le -1.07520631018369964615548.
\label{eq:second-upper}
\end{equation}
Combining \eqref{eq:second-lower} and \eqref{eq:second-upper},
\begin{equation}
G_{\alpha_2}(p ;c^{(2)})
-
\sup_{p\in\Rdet}G_{\alpha_2}(p;c^{(2)})
\ge 2.711224394247\times10^{-11}>0.
\label{eq:second-gap}
\end{equation}

The channel is strictly positive and
\[
\det W_Y^{(2)}
=0.00262234785640323437253459295538\ldots\ne0,
\qquad 0<\alpha_2<1.
\]
Thus Theorem~\ref{thm:bridge} applies. 
We have that also for this channel, 
\[
\max_{p:\,U-X-V}G_{\alpha_2}(p;c^{(2)})
< F_{\alpha_2}(T_2;c^{(2)}),
\]
and the Markovity Conjecture fails.

\section{Concluding Remarks}

We have shown two counterexamples of \cite[Conjecture 2]{gohari2025conjecture}. 
The two use the same irreducible ternary nonrectangular pattern
\[
\begin{pmatrix}A&B\\C&A\end{pmatrix},
\]
but they serve different purposes.  
The first counterexample has a larger guaranteed gap ($\approx 3.174197552654\times10^{-4}$ nats), and its per-class numerical values make the competing architectures transparent. 
In comparison, the second counterexample is specified as an exact-rational dataset and is accompanied by a certificate that treats arbitrary rectangular cardinalities through the row/column-type collapse.  
The strict-Jensen and Markov-cardinality arguments isolate the precise bridge needed to refute the literal stochastic conjecture.

Neither counterexample invalidates Marton's achievable region, and neither resolves the separate Additivity Conjecture \cite[Conjecture 1]{gohari2025conjecture}. 
They show instead that a global optimizer of the Marton dual functional need not admit conditional independence $U-X-V$, even though the objective explicitly penalizes $I(U;V)$.
It remains of interest to determine whether \cite[Conjecture 1]{gohari2025conjecture} holds.

\section*{Acknowledgements}
Yanxiao Liu would like to thank Prof. Chandra Nair for suggesting the use of AI models to search for counterexamples of his conjectures, and Prof. Chandra Nair and Mr. LIU Yi for their help in carefully checking the counterexample.

\bibliographystyle{IEEEtran}
\bibliography{ref.bib}

\appendix

\section{Proof of Lemma \ref{lem:collapse}}
\label{app::lem_collapse}

\begin{proof}
Delete zero-probability rows and columns and write the nonempty fibers as $A_x\times B_x$.  A row of the mapping is a partition of $\cV$ by a subfamily of the at most three sets $B_x$.  If a one-block row occurs, every remaining row has the same complementary partition.  If a three-block row occurs, every row has the same type.  Finally, if two distinct two-block types occur, they may be relabeled as $\{0,1\}$ and $\{0,2\}$.  Then $B_1=B_2=\cV\setminus B_0$; because the rectangles for symbols $1$ and $2$ have overlapping column sets, their row sets are disjoint, excluding the third type $\{1,2\}$.  Hence a two-sided map has at most two row types.  By symmetry, it has at most two column types.

Let $U^*$ be the row type of $U$.  Identical map rows imply that $X$ is a function of $(U^*,V)$, and all terms except $I(U;Y)-I(U;V)$ are unchanged by merging.  Moreover,
\[
\begin{aligned}
G_\alpha(U^*,V,X)-G_\alpha(U,V,X)
&=I(U;V| U^*)-I(U;Y| U^*)\\
&=H(U| U^*,Y)-H(U| U^*,V)\ge0,
\end{aligned}
\]
because, conditionally on $U^*$, one has $U-V-Y$.  Merging identical column types $V\mapsto V^*$ similarly gives, at $\lambda=1$,
\[
G_\alpha(U^*,V^*,X)-G_\alpha(U^*,V,X)
=H(V| V^*,Z)-H(V| V^*,U^*)\ge0.
\]
Thus every two-sided rectangular architecture collapses to a $2\times2$ one.  If only one row or one column type remains, data processing reduces it to the corresponding one-sided superposition problem.
\end{proof}

\section{Proof of Theorem \ref{thm:markov-cardinality}}
\label{app::markov_cardinality}
\begin{proof}
Write the Markov law as $p_X(x)A(u| x)B(v| x)$.  First hold $p_X$ and $B$ fixed.  Put $w_u=p(u)$ and $r_u(x)=p(x| u)$.  Then $\sum_u w_ur_u=p_X$, and the only $U$-dependent part of the objective is
\[
I(U;Y)-I(U;V)
=\sum_u w_u\Bigl[D(r_uW_Y\Vert p_Y)-D(r_uB\Vert p_V)\Bigr].
\]
Restrict the linear moment problem first to the finite posterior points $r_u$ already present in the given law, and optimize only their weights subject to the barycenter constraint.  The original weights are feasible.  An extreme feasible weight vector has at most $m$ positive entries: if more than $m$ positive-weight posteriors were present, their weights could be perturbed in both directions while preserving the barycenter.  Hence $U$ may be reduced to at most $m$ values without decreasing the objective.

Now hold the reduced $A$ and $p_X$ fixed and apply the same argument to
\[
I(V;Z)-I(U;V)
=\sum_v p(v)\Bigl[D(p_{X| v}W_Z\Vert p_Z)-D(p_{X| v}A\Vert p_U)\Bigr].
\]
This reduces $V$ to at most $m$ values without changing $A$ or undoing the first bound.  Compactness and continuity give attainment; boundary input distributions are handled on their support.
\end{proof}

\section{Proof of Lemma \ref{lem:convex-decomposition}}
\label{app::convex-decomposition}
\begin{proof}
For fixed $s$, the output marginals and the channels from $U$ to $Y$ and from $V$ to $Z$ are affine in $K$.  Negative entropy is convex, mutual information is convex in a channel with fixed input law, $-I(U;V)$ is constant, and the potential term is affine.  Thus $G_\alpha(s,K)$ is convex.

For every map $f:\supp s\to\cX$, define
\[
\rho_f=\prod_{(u,v)\in\supp s}K_{uv}(f(u,v)).
\]
Then $K$ is the convex combination of the deterministic kernels $K_f(x| u,v)=\mathbf 1\{x=f(u,v)\}$ with weights $\rho_f$.  If $G_\alpha(s,K)=F_\alpha(T;c)$, convexity and global optimality give
\[
F_\alpha(T;c)
\le\sum_f\rho_fG_\alpha(s,K_f)
\le\sum_f\rho_fF_\alpha(T;c)=F_\alpha(T;c).
\]
Therefore every component with $\rho_f>0$ has the global value.
\end{proof}

\section{Proof of Lemma \ref{lem:strict-jensen}}
\label{app::proof_strict_jensen}

\begin{proof}
Use Lemma~\ref{lem:convex-decomposition}.  Equality holds in the convex decomposition of the objective.  Every Jensen gap contributed by its convex terms is nonnegative, so equality holds separately for $-H(Y)$ whenever $\alpha>0$ and for $-H(Z)$ whenever $1-\alpha>0$.  Strict concavity of entropy therefore forces all positive-weight deterministic components to have the same $Y$-marginal in the first case and the same $Z$-marginal in the second.

If an active cell $(u,v)$ assigned positive probability to distinct inputs $x,x'$, the product decomposition would contain two positive-weight deterministic maps that agree everywhere except at this cell.  Equality of their output marginals would give
\[
s(u,v)\bigl(W_Y(\cdot| x)-W_Y(\cdot| x')\bigr)=0
\]
whenever $\alpha>0$, and the analogous equality for $Z$ whenever $1-\alpha>0$.  Since $s(u,v)>0$, this contradicts \eqref{eq:visibility}.
\end{proof}

\section{Proof of Theorem \ref{thm:bridge}}
\label{app::bridge}

\begin{proof}
Suppose first that a deterministic rectangular global maximizer exists.  Replace its conditional law in every fiber by
\[
p(x)p(u| x)p(v| x).
\]
Because each fiber is rectangular, this conditional Markovization remains inside the same fiber, so $X$ is still a function of $(U,V)$.  It preserves $p_X,p_{UX},p_{VX}$ and hence all objective terms except $-I(U;V)$.  Therefore
\[
G_\alpha(\mathsf Mp;c)-G_\alpha(p;c)=I_p(U;V| X)\ge0.
\]
Global optimality forces equality, and $\mathsf Mp$ is a Markov global maximizer.

Conversely, assume a Markov law attains the unrestricted value.  Theorem~\ref{thm:markov-cardinality} permits finite auxiliaries.  Irreducibility and $0<\alpha<1$ imply \eqref{eq:visibility}, so Lemma~\ref{lem:strict-jensen} forces the encoder itself to be deterministic, say $X=f(U,V)$.  At its fixed input marginal, this law must maximize
\[
I(U;Y)+I(V;Z)-I(U;V),
\]
because every other term of $G_\alpha$ depends only on $p_X$.  The full-support theorem of Gohari--El Gamal--Anantharam \cite[Theorem 1]{gohari2014marton} then gives $p(u,v)>0$ on the entire active grid after unused labels are deleted.  Finally,
\[
p(u,v| x)=p(u| x)p(v| x)
\]
and deterministic encoding imply
\[
f^{-1}(x)=\supp p(U| x)\times\supp p(V| x)
\]
for every nonempty fiber.  Thus $f$ is rectangular.
\end{proof}

\section{Numerical values of the rectangular classes} 
\label{app:numerical-values}
For a representative map $m=(m_{00},m_{01},m_{10},m_{11})$, let
\[
M_m:=\max_{p_{UV}\in\Delta_3}G_\alpha(p_{UV};m,c).
\]
The column $\widehat M_m$ below is the independently cross-checked numerical estimate of $M_m$.  The MPFR branch-and-bound calculation was a threshold certificate rather than an optimizer: it rigorously certified
\[
M_m<-1.0284
\]
for every class, but it did not by itself output the exact value of $M_m$.  All logarithms are natural, so the values are in nats.

\begin{table}[h!]
\centering
\scriptsize
\renewcommand{\arraystretch}{1.14}
\setlength{\tabcolsep}{4.2pt}
\begin{tabular}{c c r l r r}
\toprule
Class & Representative $f(u,v)$ & $\widehat M_m$ & Numerical maximizing regime & Nodes & Depth \\
\midrule
$0000$ & $\begin{psmallmatrix}0&0\\0&0\end{psmallmatrix}$
& $-1.033950244795077$ & constant $X=0$ & $62{,}470$ & $27$ \\
$0011$ & $\begin{psmallmatrix}0&0\\1&1\end{psmallmatrix}$
& $-1.029164107755570$ & $U=X$, support $\{0,1\}$ & $156{,}435$ & $33$ \\
$0012$ & $\begin{psmallmatrix}0&0\\1&2\end{psmallmatrix}$
& $-1.029164107755570$ & boundary: $U=X$, support $\{0,1\}$ & $46{,}199$ & $33$ \\
$0022$ & $\begin{psmallmatrix}0&0\\2&2\end{psmallmatrix}$
& $-1.033950244795077$ & constant $X=0$ & $21{,}793$ & $27$ \\
$0101$ & $\begin{psmallmatrix}0&1\\0&1\end{psmallmatrix}$
& $-1.033950244795077$ & constant $X=0$ & $27{,}988$ & $27$ \\
$0102$ & $\begin{psmallmatrix}0&1\\0&2\end{psmallmatrix}$
& $-1.030243577369243$ & boundary: $V=X$, support $\{0,2\}$ & $29{,}331$ & $30$ \\
$0121$ & $\begin{psmallmatrix}0&1\\2&1\end{psmallmatrix}$
& $-1.030715605716285$ & boundary: $V=X$, support $\{1,2\}$ & $19{,}773$ & $30$ \\
$0122$ & $\begin{psmallmatrix}0&1\\2&2\end{psmallmatrix}$
& $-1.032393431756881$ & boundary: $U=X$, support $\{1,2\}$ & $15{,}181$ & $27$ \\
$0202$ & $\begin{psmallmatrix}0&2\\0&2\end{psmallmatrix}$
& $-1.030243577369243$ & $V=X$, support $\{0,2\}$ & $74{,}643$ & $30$ \\
$0211$ & $\begin{psmallmatrix}0&2\\1&1\end{psmallmatrix}$
& $-1.028812653160046$ & full-support interior point & $145{,}242$ & $34$ \\
$0212$ & $\begin{psmallmatrix}0&2\\1&2\end{psmallmatrix}$
& $-1.028705925572404$ & full-support interior point & $143{,}032$ & $34$ \\
$1111$ & $\begin{psmallmatrix}1&1\\1&1\end{psmallmatrix}$
& $-1.040946155452082$ & constant $X=1$ & $19{,}814$ & $23$ \\
$1122$ & $\begin{psmallmatrix}1&1\\2&2\end{psmallmatrix}$
& $-1.032393431756881$ & $U=X$, support $\{1,2\}$ & $28{,}174$ & $26$ \\
$1212$ & $\begin{psmallmatrix}1&2\\1&2\end{psmallmatrix}$
& $-1.030715605716285$ & $V=X$, support $\{1,2\}$ & $40{,}268$ & $29$ \\
$2222$ & $\begin{psmallmatrix}2&2\\2&2\end{psmallmatrix}$
& $-1.053609853725562$ & constant $X=2$ & $7{,}508$ & $20$ \\
\midrule
$U=X$ & simple branch
& $-1.029164107755570$ & $p_X=(0.6743247741,0.3256752259,0)$ & $76$ & $9$ \\
$V=X$ & simple branch
& $-1.030243577369243$ & $p_X=(0.8119058198,0,0.1880941802)$ & $43$ & $9$ \\
\bottomrule\\
\end{tabular}
\caption{Cross-checked per-class numerical maxima and the node/depth data from the completed 192-bit outward-rounded threshold certificates.  For every row, the certified conclusion is $M_m<-1.0284$.}
\label{tab:rectangular-class-values}
\end{table}

The two genuine interior stationary points were refined at 80-digit precision:
\[
\begin{aligned}
M_{0211}
&\approx -1.0288126531600463791578719687926416013,\\
M_{0212}
&\approx -1.0287059255724039042921401582684364633.
\end{aligned}
\]
The latter remains the largest rectangular value found.  Comparing it with the explicit nonrectangular value
\[
G_{\mathrm N}
=-1.0280825802447345165424074624039378864\ldots
\]
gives the numerical gap
\[
G_{\mathrm N}-\widehat M_{0212}
\approx 6.2334532766938775\times 10^{-4}.
\]

Combining \eqref{eq:target-interval} and \eqref{eq:rect-upper} gives the strict separation
\begin{equation}
\begin{aligned}
G_\alpha(p_{\nrect};c)
&\ge -1.0280825802447346\\
&>-1.0284\\
&>\sup_{p\in\Rdet}G_\alpha(p;c).
\end{aligned}
\label{eq:strict-separation}
\end{equation}
The guaranteed margin between the explicit nonrectangular value and the separating threshold is at least
\begin{equation}
3.174197552654\times10^{-4}\quad\text{nats}.
\end{equation}

\end{document}